\newcommand{\be}{\begin{eqnarray}}
\newcommand{\ee}{\end{eqnarray}}
\begin{document}

\renewcommand{\textfraction}{0}
\newtheorem{theorem}{Theorem}
\newtheorem{lemma}{Lemma}
\newtheorem{definition}{Definition}
\newtheorem{remark}{Remark}
\title{On Locality in Distributed Storage Systems}

%\author{Ankit~Singh~Rawat~and~Sriram~Vishwanath\\
%        LINC, Department of Electrical and Computer Engineering,\\
%        University of Texas at Austin, TX, USA\\
%        Email: ankitsr@utexas.edu, sriram@ece.utexas.edu}
        
\author{Ankit~Singh~Rawat
        and~Sriram~Vishwanath% <-this % stops a space
\thanks{Ankit Singh Rawat and Sriram Vishwanath are with the Dept. of ECE, University of Texas at Austin, Austin, TX 78751 USA.\newline E-mail: \{ankitsr, sriram\}@mail.utexas.edu.}% <-this % stops a space
%\thanks{S. Vishwanath is with the Department of Electrical and Computer Engineering, The University of Texas at Austin, Austin, TX 78751 USA e-mail: sriram@ece.utexas.edu.}% <-this % stops a space
%\thanks{Manuscript received October 7, 2011; revised .}
}
\maketitle

\begin{abstract}
This paper studies the design of codes for distributed storage systems (DSS) that enable local repair in the event of node failure. This paper presents locally repairable codes based on low degree multivariate polynomials. Its code construction mechanism extends  work on Noisy Interpolating Set by Dvir et al. \cite{dvir2011}. The paper presents two classes of codes that allow node repair to be performed by contacting $2$ and $3$ surviving nodes respectively. It further shows that both classes are good in terms of their rate and minimum distance, and allow their rate to be bartered for greater flexibility in the repair process.
\end{abstract}

\begin{IEEEkeywords}
Distributed storage systems, locally repairable codes, punctured Reed-Muller codes.
\end{IEEEkeywords}

\IEEEpeerreviewmaketitle

\section{Introduction}

The importance of `cloud' storage has resulted in a growing body of work in both theoretical analysis and practical designs for efficient distributed storage systems. Conventionally, resilience in distributed storage is obtained by simple replication of data; however, such  replication can be highly inefficient in terms of the number of nodes required for this resilience. Thus, coding has come to play a central role in designs for resilient distributed storage systems (DSS). In particular, coding schemes for DSS must enable  efficient system repair in the event of (a small number of) node failures \cite{dimakis}. In \cite{dimakis}, Dimakis et al., the authors consider the total amount of data downloaded during single node repair, i.e., {\em repair bandwidth}, as an important metric to gauge the efficiency of any coding scheme employed in DSS and presents an lower bound on repair bandwidth. Since then, multiple codes has been proposed that achieve this lower bound.
 
In general, there are multiple, possibly apposing, metrics using which the performance of a DSS can be characterized such as security, locality, load-balancing and privacy; and the metric of interest in this paper is locality \cite{gopalan, oggier_hom, oggier_proj}. The goal of this line of research is to design coding mechanisms for DSS that enable node repair to be accomplished while requiring contact with only a small number of surviving nodes in the system. In \cite{gopalan}, Gopalan et al. establish an upper bound analogous to the singleton bound on the minimum distance of locally repairable codes and show that pyramid codes \cite{pyramid} achieve this bound. Subsequently, the work in Prakash et al. extends the bound to more general definition of locally repairable codes \cite{kumar2012}.

In this paper, our goal is to generalize \& extend the existing literature on locality in repair (and decoding) in DSS  \cite{oggier_hom, oggier_proj, pyramid}. Our coding scheme builds on schemes studied in the domain of locally decodable codes (LDC) \cite{ldc}. Specifically, our coding scheme  employs a punctured Reed-Muller (RM) code based on low-degree multivariate polynomials to store data. Our reason for choosing RM codes based on low-degree multivariate polynomials is  the inherent locality of its codewords. Indeed, it is well known in LDC literature that RM codes generated using low-degree polynomials afford locality, at the cost of being low rate. In other words, the extent of redundancy required in these codes is comparatively much higher than a code without locality properties. This has rendered an RM code based LDC unattractive, as the advantage provided by locality is superseded by the large storage space requirement of these codes.

In this paper, our approach is one of judiciously puncturing RM codes in order to obtain `good' rates for the resulting codes while still retaining the locality property for repair. A  na\"{\i}ve approach to puncturing RM codes compounds problems, as one may lose aspects of algebraic structure that make RM so desirable, including loss in structured decoding strategies as well as locality. Keeping this in mind, we turn to a methodical approach for   puncturing of  RM codes as introduced in \cite{dvir2011}. In \cite{dvir2011}, Dvir et al. develop an algorithm for puncturing RM codes based on low-degree polynomials which results in `good' codes, i.e., codes with constant rate and constant relative distance (in block length). Moreover, \cite{dvir2011} also presents an efficient decoding algorithm for these punctured codes. However, \cite{dvir2011} does not address locality properties of the resulting punctured RM code. In this paper, we show that a modified version of the punctured RM codes as studied in \cite{dvir2011} exist that are simultaneously `good' from all three perspectives -  rate (extent of storage), distance (resilience) and locality for repair of DSS.

The remainder of this paper is organized as follows. In Sec.~\ref{sec:prelim} we provide a brief introduction to generalized RM codes from polynomial evaluation perspective with their properties relevant to this paper. %We also highlight the idea behind the work of Dvir et al. \cite{dvir2011} in this section.
In Sec.~\ref{sec:loc}, we  define the notion of local repair and characterize the locality afforded by RM codes. In Sec.~\ref{sec:loc2} and \ref{sec:loc3}, we present two closely related coding schemes for DSS, which enable local repair based on $2$ and $3$ nodes respectively.
 
\section{Background: RM codes}%Preliminaries}
\label{sec:prelim}
%\subsection{Reed-Muller Codes}
A generalized RM code $\mathcal{RM}_q(u,m)$ is defined with the help of irreducible polynomials from $\mathbb{F}_{q}[x_1,\ldots, x_m]$ of degree at most $u$. Here, $\mathbb{F}_{q}[x_1,\ldots, x_m]$ denotes the ring of $m$-variate polynomials over field $\mathbb{F}_{q}$, and a polynomial in this ring is called irreducible if its degree in each variable is less than $q-1$. Throughout the paper, we assume $\mathbb{F}_q$ to be a prime field. Each irreducible polynomial of degree at most $u$ gives a $q^m$-length codeword in $\mathcal{RM}_q(u,m)$ when this polynomial is evaluated at all $q^m$ points in $\mathbb{F}_q^m$. Thus, $\mathcal{RM}_q(u,m)$ can be defined as follows:
%%\begin{equation}
%%\label{eq:rm_def}
%%\begin{split}
%%\mathcal{RM}_m(u,q) =& \left\{\text{eval$(f)_{\mathbb{F}_q^m} \in \mathbb{F}_q^{q^m}: f \in \mathbb{F}_q[x_1,.., x_m]$} \\ & \text{ and deg$(f) \leq u$}\right\} 
%%\end{spilti}
%%\end{equation}
\begin{dmath}
\label{eq:rm_def} 
\mathcal{RM}_q(u,m) = \left\{{\text{eval}(f)_{\mathbb{F}_q^m} \in \mathbb{F}_q^{q^m}:} {f \in \mathbb{F}_q[x_1,\ldots, x_m]~\text{\& deg}(f) \leq u}\right\} 
\end{dmath}
%%\begin{dmath}  Q(\lambda,\hat{\lambda}) = -\frac{1}{2} P{(O \mid \lambda )} \sum_s \sum_m \sum_t \gamma_m^{(s)} (t) \left( n \log(2 \pi ) + \log \left| C_m^{(s)} \right| + \left( \mathbf{o}_t - \hat{\mu}_m^{(s)} \right) ^T C_m^{(s)-1} \left(\mathbf{o}_t - \hat{\mu}_m^{(s)}\right) \right)
%%\end{dmath}

where eval$(f)_{\mathbb{F}_q^m} = (f(\alpha_1),\ldots, f(\alpha_{q^m}))$ denotes evaluations of the polynomial $f$ at all $q^m$ points in $\mathbb{F}_{q}^{m}$. The dual code of $\mathcal{RM}_q(u,m)$ is also an RM code and is generated by irreducible polynomials of degree at most $u^{\perp} = (q-1)m-u-1$, i.e.,
\begin{equation}
\label{eq:rm_dual}
\left(\mathcal{RM}_q(u,m)\right)^{\perp} = \mathcal{RM}^{\perp}_q(u,m) = \mathcal{RM}_q((q-1)m-u-1,m). \nonumber
\end{equation}

Minimum distance of $\mathcal{RM}_q(u,m)$ is given by the following  \cite{delsarte}:
\begin{equation}
\label{eq:rm_dist}
d_{min}(\mathcal{RM}_q(u,m)) = (q - \theta)q^{m - \mu - 1}
\end{equation} 
where $u = \mu(q-1) + \theta$ with $0 \leq \theta \leq q-1$. Moreover, Delsarte et al. also characterize the codewords of minimum weight for this code \cite{delsarte}. These minimum weight codewords are associated with the polynomials of following form in $\mathbb{F}_q[x_1,\ldots, x_m]$:
\begin{equation}
\label{eq:rm_min_cw}
f(\mathbf{x}) = \omega_0 \Pi_{i = 1}^{\mu} \left(1 - (\ell_i(\mathbf{x}) - \omega_i)^{q-1}\right) \Pi_{j = 1}^{\theta}\left(\ell_{\mu+1}(\mathbf{x}) - \widetilde{\omega}_j\right)
\end{equation}
where $\{\widetilde{\omega}_j\}_{j=1}^{\theta}$ are distinct elements from $\mathbb{F}_{q}$, and $\{\omega\}_{i=0}^{\mu}$ are arbitrary elements from $\mathbb{F}_q$ with $\omega_0 \neq 0$. Here, $\{\ell_i(\cdot)\}_{i=1}^{\mu + 1}$ represent $\mu + 1$ linearly independent linear forms (functions) on $\mathbb{F}_q^m$. 

\section{On Locality Properties of Codewords}
\label{sec:loc}
In this section, we illustrate the desired locality property in a codeword of a generalized RM code. First, we formally define the notion of locality of an encoded symbol:
\definition A particular symbol in a codeword has locality $r$ if it can be recovered by accessing encoded symbols from only $r$ other positions, i.e., it is uniquely defined by a set of $r$ encoded symbols in a codeword. 

%\definition{$(r,d_{min})$-code:) A linear code is said to be an $(r,d_{min})$-code, if has distance $d_{min}$, and all information (encoded) symbols have locality at most $r$.
For a particular encoded symbol, possessing a locality of $r$ is equivalent to having a codeword of support at most $r+1$ in the dual code such that the support of this codeword in dual code contains index of encoded symbol in interest. Since our object of study is the code $\mathcal{RM}_q(u,m)$, we focus on the codeword of minimum support in its dual code. For RM codes based on polynomials of degree at most $u \leq q-2$, we know that the following holds for $\mathcal{RM}^{\perp}_q(u,m) = \mathcal{RM}_q(m(q-1)-u-1,m)$:
\begin{equation}
u^{\perp} = (m-1)(q-1)+ (q-u-2) = \mu(q-1) + \theta. \nonumber
\end{equation}
Thus, it follows from (\ref{eq:rm_dist}) that $d_{\min}(\mathcal{RM}^{\perp}_q(u,m))$ is $u+2$. Next, we present a Lemma that establishes a necessary condition on the support of  minimum weight codewords of  $\mathcal{RM}^{\perp}_q(u,m)$.

\begin{lemma}
\label{lem:line}
For every minimum weight codeword of $\mathcal{RM}^{\perp}_q(u,m)$, $u+2$ points corresponding to its support lie on on a line in $\mathbb{F}_{q}^m$. In other words, all $u+2$ points $(\mathbf{p}_1,\ldots, \mathbf{p}_{u+2})$ are of the form $(\mathbf{p}_1, \mathbf{p}_1 + t_1\mathbf{h},\ldots, \mathbf{p}_{1}+t_{u+1}\mathbf{h})$, where $\{t_i\}_{i=1}^{u+1}$ are distinct and nonzero elements of $\mathbb{F}_q$ and $\mathbf{h}$ represents the direction of the line.
\end{lemma}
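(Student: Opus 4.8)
\noindent\emph{Proof strategy.} My plan is to read the geometry of the support directly off Delsarte's explicit description of the minimum-weight codewords in (\ref{eq:rm_min_cw}), rather than argue abstractly. First, since we are in the regime $u \le q-2$, I would record, exactly as in the displayed identity preceding the Lemma, that $u^{\perp} = m(q-1)-u-1 = (m-1)(q-1) + (q-u-2)$, so in the notation of (\ref{eq:rm_dist}) the code $\mathcal{RM}^{\perp}_q(u,m) = \mathcal{RM}_q(u^{\perp},m)$ has parameters $\mu = m-1$ and $\theta = q-u-2$. By (\ref{eq:rm_min_cw}), this means that every minimum-weight codeword of $\mathcal{RM}^{\perp}_q(u,m)$ equals $\text{eval}(f)_{\mathbb{F}_q^m}$ for some polynomial
\[
f(\mathbf{x}) = \omega_0 \prod_{i=1}^{m-1}\left(1 - (\ell_i(\mathbf{x}) - \omega_i)^{q-1}\right)\prod_{j=1}^{q-u-2}\left(\ell_m(\mathbf{x}) - \widetilde{\omega}_j\right),
\]
with $\ell_1,\ldots,\ell_m$ linearly independent linear forms on $\mathbb{F}_q^m$, the $\widetilde{\omega}_j$ distinct, and $\omega_0 \neq 0$; the whole problem then reduces to locating the nonzeros of such an $f$.

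The main computation is the following. Using $a^{q-1}=1$ for every $a \in \mathbb{F}_q \setminus \{0\}$ and $0^{q-1}=0$, the factor $1 - (\ell_i(\mathbf{x}) - \omega_i)^{q-1}$ equals $1$ when $\ell_i(\mathbf{x}) = \omega_i$ and $0$ otherwise, so the first product is nonzero exactly on the set $L$ defined by the $m-1$ affine conditions $\ell_1(\mathbf{x}) = \omega_1,\ldots,\ell_{m-1}(\mathbf{x}) = \omega_{m-1}$. Since $\ell_1,\ldots,\ell_{m-1}$ are linearly independent, $L$ is an affine line in $\mathbb{F}_q^m$, which I can write as $L = \{\mathbf{p} + t\mathbf{h} : t \in \mathbb{F}_q\}$ with $\mathbf{h} \neq \mathbf{0}$. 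Restricted to $L$, the form $\ell_m$ --- linearly independent of $\ell_1,\ldots,\ell_{m-1}$ --- is a nonconstant affine function of $t$, hence takes each value in $\mathbb{F}_q$ exactly once along $L$, so it equals $\widetilde{\omega}_j$ at precisely one point of $L$ for each $j$. Therefore $f$ vanishes off $L$ and at exactly $q-u-2$ of the $q$ points of $L$, and its support is the remaining $q - (q-u-2) = u+2$ points, all on the line $L$ (reassuringly matching $d_{\min}(\mathcal{RM}^{\perp}_q(u,m)) = u+2$). Choosing one of these $u+2$ points as the base point $\mathbf{p}_1$ and keeping $\mathbf{h}$ as the direction, the remaining support points become $\mathbf{p}_1 + t_i\mathbf{h}$ for distinct nonzero $t_i \in \mathbb{F}_q$, which is exactly the form claimed.

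I do not expect a genuinely hard step here: once (\ref{eq:rm_min_cw}) is invoked, everything is an elementary evaluation-set computation. The only place I would be careful is the linear-algebra bookkeeping --- verifying that the common solution set of the $m-1$ independent affine equations $\ell_i(\mathbf{x}) = \omega_i$ really is one-dimensional, and that $\ell_m$ is nonconstant when restricted to it --- but both facts follow immediately from the assumed linear independence of $\ell_1,\ldots,\ell_m$.
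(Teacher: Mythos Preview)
Your proof is correct and rests on the same starting point as the paper --- Delsarte's explicit form (\ref{eq:rm_min_cw}) specialized to $\mu=m-1$, $\theta=q-u-2$ --- but the argument after that is organized differently. The paper collects the coefficient vectors of $\ell_1,\ldots,\ell_m$ into a full-rank matrix $L$, observes that any support point $\mathbf{p}_i$ satisfies $L\mathbf{p}_i=(\omega_1,\ldots,\omega_{m-1},\ast)^T$, and then argues \emph{by contradiction}: if three support points did not lie on a common line, the two resulting direction vectors $\mathbf{h},\mathbf{g}$ would both map under $L$ into the one-dimensional subspace $\{0\}^{m-1}\times\mathbb{F}_q$, forcing a nonzero vector in the kernel of $L$. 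You instead argue \emph{constructively}: the first product of factors is the indicator of the affine line $\{\ell_1=\omega_1,\ldots,\ell_{m-1}=\omega_{m-1}\}$, and the second product kills exactly $q-u-2$ points on that line, so the support is the remaining $u+2$ collinear points. Your route is slightly more economical --- it gives the support explicitly and avoids the contradiction --- while the paper's matrix formulation has the side benefit of feeding directly into the subsequent construction (the matrix $M=L^{-1}$ used just after the Lemma to build a dual codeword on prescribed collinear points).
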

\begin{proof}
Let $\mathbf{c}$ be a minimum weight codeword of $\mathcal{RM}^{\perp}_m(u,q)$. From (\ref{eq:rm_min_cw}),  $\mathbf{c}$ is obtained as evaluations of a polynomial of the following form at all points of $\mathbb{F}_q^m$:
\begin{equation}
\label{eq:rm_min_cw1}
f(\mathbf{x}) = \omega_0 \Pi_{i = 1}^{m-1} \left(1 - (\ell_i(\mathbf{x}) - \omega_i)^{q-1}\right) \Pi_{j = 1}^{q-u-2}\left(\ell_{m}(\mathbf{x}) - \widetilde{\omega}_j\right) \nonumber
\end{equation}
where $\{\ell_i(\cdot)\}_{i=1}^{m}$ are $m$ linearly independent linear forms (functions) and  $\{\widetilde{\omega}_j\}_{j=1}^{q-u-2}$ are distinct elements in $\mathbb{F}_{q}$. 
Note that each linear form $\ell_i(\cdot)$ can be represented by a vector $L_{i} \in \mathbb{F}_q^m$ containing the coefficient of $\ell_i(\cdot)$. Let $L$ be a $m\times m$ matrix which has $L_i$ as its $i^{th}$ row. Since $\ell_i(\cdot)$ are linearly independent, $L$ is a full-rank matrix. For $u+2$ points $(\mathbf{p}_1,\ldots, \mathbf{p}_{u+2})$ corresponding to the support of $\mathbf{c}$, we have
\begin{equation}
\label{eq:lem1}
Lp_i = \Omega_i
\end{equation}
where $\left\{\Omega_i = (\omega_1, \omega_2,\ldots, \omega_{m-1}, \widetilde{\omega}_{i})^T\right\}_{i=1}^{u+2}$.
Now assume that not all $u+2$ points lie on a line, i.e., without loss of generality there exist three points in the set of $u+2$ points $\{\mathbf{p}_1, \mathbf{p}_2, \mathbf{p}_3\}$ such that
\begin{eqnarray}
\mathbf{p}_2 = \mathbf{p}_1 + t_1\mathbf{h}~\text{and}~\mathbf{p}_3 = \mathbf{p}_1 + t_2\mathbf{g} \nonumber
\end{eqnarray}
where $\mathbf{g} \neq r\mathbf{h}$ for any $r \in \mathbb{F}_q\backslash\{0\}$. From (\ref{eq:lem1}) we have
\begin{equation}
\begin{array}{l}
\label{eq:lem2}
L\mathbf{h} = \left(0,\ldots,0, \frac{\widetilde{\omega}_2 - \widetilde{\omega}_1}{t_1}\right) = \widehat{\Omega}_{\mathbf{h}}\\
L\mathbf{g} = \left(0,\ldots,0, \frac{\widetilde{\omega}_3 - \widetilde{\omega}_1}{t_2}\right) = \widehat{\Omega}_{\mathbf{g}}.
\end{array}
\end{equation}
Note that it is possible to find $(\alpha_1, \alpha_2) \neq (0,0) \in \mathbb{F}_q^2$ such that $\alpha_1\widehat{\Omega}_{\mathbf{h}} + \alpha_2\widehat{\Omega}_{\mathbf{g}} = \mathbf{0} \in \mathbb{F}_q^m$. Thus, it follows from (\ref{eq:lem2}),
\begin{equation}
L(\alpha_1\mathbf{h} + \alpha_2\mathbf{g}) = \alpha_1\widehat{\Omega}_{\mathbf{h}} + \alpha_2\widehat{\Omega}_{\mathbf{g}} = \mathbf{0}. \nonumber
\end{equation}
This, however, contradicts the full-rank nature of $L$ as $\alpha_1\mathbf{h} + \alpha_2\mathbf{g}$ is a nonzero vector and a full-rank matrix must have a trivial null space. So, all $u+2$ points corresponding to support of a minimum weight codeword $\mathbf{c}$ must lie on a line.
\end{proof}

In fact, an even stronger result holds in the sense that, given any $u+2$ points on a line in $\mathbb{F}_q^m$, a minimum weight codeword of $\mathcal{RM}_q^{\perp}(u,m)$ exists that is supported on these $u+2$ points. Next, we illustrate a procedure for determining a polynomial in $\mathbb{F}_q[x_1,\ldots, x_m]$ with degree at most $m(q-1)-u-1$ that corresponds to a codeword in $\mathcal{RM}_q^{\perp}(u,m)$ supported on a particular set of $u+2$ points on a line $(\mathbf{p}_1,\ldots, \mathbf{p}_{u+2}) = (\mathbf{p}_1, \mathbf{p}_2 + t_1\mathbf{h}\ldots, \mathbf{p}_{1}+t_{u+1}\mathbf{h})$; where $\{t_i\}_{i=1}^{u+1}$ are nonzero distinct elements of $\mathbb{F}_q$. First, pick a vector $\mathbf{v} = (0,\ldots, 0, v_m)^T \in \mathbb{F}_q^m$ which is nonzero only at the $m^{th}$ coordinate. Second, construct an invertible matrix $\mathbb{F}_q^{m\times m}$ such that 
\begin{equation}
\label{eq:mat}
M\mathbf{v} = t_1\mathbf{h}, \nonumber
\end{equation}   
i.e., the last column of $M$ is $t_1\mathbf{h}$. Finally, define a polynomial $g(\cdot)$ of degree $m(q-1)-q-1 = (m-1)(q-1) + (q-u-2)$,
\begin{equation}
\label{eq:rm_min_cw2}
g(\mathbf{x}) = \omega_0 \Pi_{i = 1}^{m-1} \left(1 - (\ell_i(\mathbf{x}) - \omega_i)^{q-1}\right) \Pi_{j = 1}^{q-u-2}\left(\ell_{m}(\mathbf{x}) - \widetilde{\omega}_j\right) \nonumber
\end{equation}
where linear functions $\ell_i(\cdot)$ are defined by the $i^{th}$ row $L_i$ of $L = M^{-1} \in \mathbb{F}_q^{m\times m}$ and $(\omega_1,\ldots,\omega_{m-1},\widehat{\omega})^T = L\mathbf{p}_1$, and $\{\widetilde{\omega}_j\}_{j=1}^{q-u-2}$ are distinct elements in $\mathbb{F}_q\backslash\{\widehat{\omega},\widehat{\omega} + \frac{t_1}{t_1}v_m, \widehat{\omega} + \frac{t_2}{t_1}v_m, \ldots, \widehat{\omega} + \frac{t_{u+1}}{t_1}v_m\}$.

It follows from Lemma~\ref{lem:line} and the construction mechanism described above that a minimum weight codeword of $\mathcal{RM}_m^{\perp}(u,q)$ can only be supported on $u+2$ points on a line; and given $u+2$ points on line, there exist multiple minimum weight codewords in $\mathcal{RM}_m^{\perp}(u,q)$ supported on these points. Note that there are multiple choices for the matrix $M$ in (\ref{eq:mat}), and each choice of $M$ results in a different codeword of $\mathcal{RM}_q^{\perp}(u,m)$ supported on $u+2$ points.

\section{Punctured RM Codes with Locality $2$}
\label{sec:loc2}
The stage is now set for a general method of designing coding schemes for DSS with locality $2$
 based on $\mathcal{RM}_q(1,m)$. Let $G$ be a generator matrix of an $[N,m,2\varepsilon N + 1]_{q}$ linear code. We construct a set $S_1 = \{\mathbf{a}_1,\ldots, \mathbf{a}_N\} \subseteq \mathbb{F}_q^{m}$, where $\{\mathbf{a}_i\}_{i=2}^N$ are $N$ rows of $G$. For each pair $(i,j) \in [N]^2$ such that $i < j$, we define a family of sets $\{B_{i,j}\}_{i<j}$ such that
\begin{equation}
\label{eq:punc_ex}
B_{i,j} = \mathbf{a}_i + t((q-1)\mathbf{a}_i+\mathbf{a}_j),~ 2 \leq t \leq 1+L, \nonumber
\end{equation}
where $L$ is a design parameter of choice. Now, we construct a set $\mathcal{S}_{\mathcal{I}}$ as follows
\begin{equation}
\label{eq:punc_set2}
\mathcal{S}_{\mathcal{I}} = S_{1} \bigcup \left(\bigcup_{(i, j) \in \mathcal{I}}B_{i,j}\right), \nonumber
\end{equation}
where $\mathcal{I} \subseteq [N]^2$ such that $(i,j) \in \mathcal{I}$ only if $i < j$. 

\subsection{Encoding Data}
Let the file to be encoded be $\mathcal{F}$. We first divide the file $\mathcal{F}$ into $K = m$ symbols $\mathbf{b} = (b_1,\ldots, b_m) \in \mathbb{F}_q^m$. Given the data vector $\mathbf{b}$, we construct a polynomial $f^{\mathbf{b}}(\cdot) \in \mathbb{F}_q[x_1,\ldots, x_m]$ of degree at most $1$ as
\begin{equation}
\label{eq:enc1}
f^{\mathbf{b}}(x_1,\ldots,x_m) = \sum_{i =1}^{m}b_ix_i.
\end{equation}
The data vector $\mathbf{b}$ is encoded to a vector $\mathbf{c}^{\mathcal{I}}_{\mathbf{b}} = \left(f^{\mathbf{b}}(\mathbf{p}_1),\ldots, f^{\mathbf{b}}(\mathbf{p}_{|\mathcal{S}_{\mathcal{I}}|})\right)$ where $\{\mathbf{p}_i\}_{i = 1}^{|\mathcal{S}_{\mathcal{I}}|}$ are distinct points of $\mathcal{S}_{\mathcal{I}}$ in any prespecified order. Each symbol in $\mathbf{c}^{\mathcal{I}}_{\mathbf{b}}$ is now stored on a different node in DSS. Let $\mathcal{C}^{\mathcal{I}}$ denote the codebook obtained using the aforementioned encoding scheme. Note that 
\begin{equation}
\label{eq:dual}
\mathcal{C}^{\mathcal{I}} \subseteq \left(\mathcal{RM}_q(1,m)\right)_{\mathcal{S}_{\mathcal{I}}},
\end{equation}
where $\left(\mathcal{RM}_q(1,m)\right)_{\mathcal{S}_{\mathcal{I}}}$ denotes the codebook obtained by puncturing a code $\mathcal{RM}_q(1,m)$ on $\mathcal{S}_{\mathcal{I}}^{C}$.

Next, we show that the code as detailed above is well defined in the sense that the dimension of the code is $K$. Let $\mathbf{y} = \left(f(\mathbf{a}_1),\ldots, f(\mathbf{a}_N)\right)^T$ denote the vector containing evaluations of $f^{\mathbf{b}}(\cdot)$ on $S_1 = \{\mathbf{a}_1,\ldots, \mathbf{a}_N\} \subseteq \mathcal{S}_{\mathcal{I}}$. It follows from (\ref{eq:enc1}) that 
\begin{equation}
\label{eq:fullrank}
\mathbf{y} = G\mathbf{b}
\end{equation}
where matrix $G$ is the generator matrix of $[N, m, 2\varepsilon N + 1]_q$ linear code used to construct $S_1$. Thus, ${\bf b}$ can be decoded from $\mathbf{y}$ using a decoding algorithm corresponding to this $[N, m, 2\varepsilon N + 1]_q$ code. This implies that the dimension of the proposed code is also $K$.

\subsection{Local Node Repair}
\label{subsec:loc2}
In this subsection, we describe a procedure for $2-$local repair of a DSS based on our code. For each $i \in [N]$, we define
$\mathcal{R}(i) = \{j \in [N] : (i,j)~\text{or}~(j,i) \in \mathcal{I} \}.$
We assume that $\mathcal{I}$ is such that 
\begin{equation}
\label{eq:assum_I}
|\mathcal{R}(i)| \geq 1~~\text{for all}~ i \in [N]
\end{equation}
i.e., for each $i \in [N]$ we add at least one set $B_{i,j}$ or $B_{j,i}$ while generating $\mathcal{S}_{\mathcal{I}}$. Without loss of generality, a node corresponding a point $\mathbf{p_1} \in \mathcal{S}_{\mathcal{I}}$ fails. It follows from (\ref{eq:assum_I}) that there exists a set of $L+2$ points $\{\mathbf{p}_1, \mathbf{p}_1 + \mathbf{h},\ldots, \mathbf{p}_1 + (L+1)\mathbf{h}\}$ in $\mathcal{S}_{\mathcal{I}}$. For example, if node corresponding to $\mathbf{a}_1$ fails and $(1,2) \in \mathcal{I}$, then $L+2$ points $\{\mathbf{a}_1, \mathbf{a}_2 = \mathbf{a}_1 + (q-1)\mathbf{a}_1 + \mathbf{a}_2,\ldots, \mathbf{a}_1 + (L+1)((q-1)\mathbf{a}_1 + \mathbf{a}_2)\}$ are in $\mathcal{S}_{\mathcal{I}}$. In this example, we have $\mathbf{p}_1 = \mathbf{a}_1$ and $\mathbf{h} = (q-1)\mathbf{a}_1 + \mathbf{a}_2$. Note that these $L+2$ points lie on a line, which has a direction $\mathbf{h}$ and passes through $\mathbf{p}_1$. For $L \geq 1$, we obtain $L+2 \geq 3$ points (including $\mathbf{p}_1$) on a line in $\mathbb{F}_q^m$. Moreover, (\ref{eq:dual}) implies that 
\begin{equation}
\label{eq:short}
(\mathcal{C}^{\mathcal{I}})^{\perp} \supseteq \left(\mathcal{RM}^{\perp}_q(1,m)\right)|^{\mathcal{S}_{\mathcal{I}}}, \nonumber
\end{equation}
where $\left(\mathcal{RM}^{\perp}_q(1,m)\right)|^{\mathcal{S}_{\mathcal{I}}}$ denotes the shortened code of $\mathcal{RM}^{\perp}_q(1,m)$ corresponding to set $\mathcal{S}_{\mathcal{I}} \subseteq \mathbb{F}_q^{q^m}$. We know from Lemma~\ref{lem:line} and the discussion following it that there exist a codeword of $\mathcal{RM}^{\perp}_q(1,m)$, which is supported on these $3$ points on a line. Moreover, this codeword is part of the shortened code $\left(\mathcal{RM}^{\perp}_q(1,m)\right)|^{\mathcal{S}_{\mathcal{I}}}$. Therefore, using this codeword in the dual code, we can recover the failed node's symbol by accessing encoded symbols corresponding to two other points on the line from two other storage nodes. This establishes $2$-locality, and therefore local repairability of our coding scheme. 

In terms of a traditional LDC understanding, the node repair process can be viewed as polynomial interpolation using at least $2$ out of the remaining $L+1$ points (excluding the point $\mathbf{p}_1$ associated with the failed node) on the line $\{\mathbf{p}_1 + \mathbf{h},\ldots, \mathbf{p}_1 + (L+1)\mathbf{h}\}$. Consider $g(t) = f^{\mathbf{b}}(\mathbf{p} + t\mathbf{h})$, a polynomial over $t$ of degree at most $1$. Given its evaluation at $2$ points, $\{\mathbf{p}_1 + t_{i_1}\mathbf{h}, \mathbf{p} + t_{i_2}\mathbf{h}\}$, we can uniquely recover $g(t)$ using any standard polynomial interpolation method. Now the desired symbol $f^{\mathbf{b}}(\mathbf{p}_1)$ can be recovered by evaluating $g(t)$ at $t = 0$.

\remark{ 
Note that, once we know the polynomial $g(t)$, we can recover encoded symbols associated with all $L+2$ points on the line defined by the pair $(\mathbf{p}_1,\mathbf{h})$. This property can be used for cooperative node repair in order to reduce repair bandwidth by determining a particular line such that it comprises of less than $L$ failures and then recovering all failures on the line simultaneously.}

\subsection{Code Parameters}
\label{subsec:code_param2}
The rate and minimum distance of the proposed coding scheme depends on three design parameters, $\mathcal{I},~L,$ and $[N, m, 2\varepsilon N + 1]_q$ linear code. In what follows, we pick an $[N, m, N-m+1]_q$ maximum distance separable (MDS) code for $[N, m, 2\varepsilon N + 1]_q$ code and analyze two cases:
\subsubsection{Case 1} 
In this case, we consider $\mathcal{I} = \{(1,2),\ldots,(i,i+1),\ldots,(N-1,N) \}$ (assuming that $N$ is even). Here, we have $|\mathcal{S}_{\mathcal{I}}| \leq N + \frac{N}{2}L$, which results in the rate of the code being greater than  $\frac{m}{N + \frac{N}{2}L} = \Theta(\frac{1}{L})$. A quick calculation that combines locality with (\ref{eq:fullrank}) shows that the proposed code is resilient against any $N-m+L$ node failures. Therefore we have,
\begin{equation}
\label{eq:dist1}
d_{\min}(\mathcal{C}^{\mathcal{I}}) \geq N-m+L+1 
\end{equation}
It also follows from (\ref{eq:fullrank}) that we can modify our code to be a systematic code by picking any set of $m$ rows of $G$ (say $\{\mathbf{a}_1,\ldots, \mathbf{a}_m\}$) to be an identity matrix, without affecting local repairability of the code. Now modify $\mathcal{I}$ to be $\{(1,2),\ldots,(m-1,m)\}$ (assuming that $m$ is even). The upper bound on $d_{\min}(\mathcal{C}^{\mathcal{I}})$ established in \cite{kumar2012} is applicable in this case. For $\delta = L + 1$ and $r = 2$ this bound results in:
\begin{eqnarray}
d_{\min}(\mathcal{C}^{\mathcal{I}}) &\leq& N + \frac{m}{2}L -m- \left(\frac{m}{2} - 1\right)L + 1\nonumber \\ &=& N-m+L+1, \nonumber
\end{eqnarray}
which, along with (\ref{eq:dist1}), proves the optimality of our codes, given that  locality for information symbols is to be ensured. Note that this code is essentially a Pyramid code as presented in \cite{pyramid,kumar2012}.
\subsubsection{Case 2}
Next we consider $\mathcal{I} = \{(i,j) \in [N]^2: i < j\}$, i.e., $|\mathcal{I}| = {N \choose 2}$. In this case, our rate becomes 
\begin{equation}
\label{eq:rate_loc2}
\text{rate}(\mathcal{C}^{\mathcal{I}}) = \frac{K}{|\mathcal{S}_{\mathcal{I}}|} \approx \frac{m}{N + {N \choose 2}L} = \Theta\left(\frac{1}{mL}\right). 
\end{equation}
The lower bound given in (\ref{eq:dist1}) holds for this case as well. For this choice of $\mathcal{I}$, many points of $\mathcal{S}_{\mathcal{I}}$ have multiple lines passing through them in $\mathcal{S}_{\mathcal{I}}$. Thus, it is more likely to be able to combine the node failures in groups along a particular line; then performing repair simultaneously for all of them by contacting just $2$ nodes. 

\remark{It is evident from previous two cases that set $\mathcal{I}$ enable us to trade-off rate of the code for flexibility in node repair and data access.}

\section{Punctured RM Codes with Locality $3$}
\label{sec:loc3}
In this section, we generalize the method of designing codes with locality $2$ from the previous section to obtain coding schemes that are $3$-local repairable. As opposed to $2$-local repairable codes, codes designed in this section are based on polynomials of degree at most $2$ in $\mathbb{F}_q[x_1,\ldots, x_m]$ and therefore related to $\mathcal{RM}_q(2,m)$. %Due to lack of space, we give detailed description of issues that are unique to this section and gloss over the concepts that have already been introduced. 
%\textbf{The rate of the codes of this section gracefully decreases in comparison to $2$-local repairable scheme as the number of ways to locally repair a particular failed node increases.}

Let $S_1$ be the set $\{\mathbf{a}_1,\ldots, \mathbf{a}_N\} \subseteq \mathbb{F}_q^{q^m}$ as defined in Sec.~\ref{sec:loc2} with respect to an $[N, m, 2\varepsilon N + 1]_q$ linear code. In this section, we also require that the maximal hamming weight of a codeword in this $[N, m, 2\varepsilon N + 1]_q$ linear code is less than $(1-2\varepsilon)N$ and minimum distance of its dual code is at least $5$. We define another set $S_{2} = S_{1} + S_{1} \subseteq \mathbb{F}_q^{q^m}.$ The requirement on minimum distance of dual code implies that $S_1$ satisfies condition $\star_2$\footnote{%$S_1$ satisfies condition $\star_{2}$ if
%\begin{equation}
%\left|\left\{\mathbf{x}\in S_{2}: \mathcal{N}_{2}(\mathbf{x}) > 2\right\} \right| \leq 1, \nonumber
%\end{equation}
%where $\mathcal{N}_{2}(\mathbf{x})$, for each $\mathbf{x} \in S_2$, is defined as follows
%\begin{equation}
%\mathcal{N}_{2}(\mathbf{x}) = \left|\left\{\mathbf{a} \in S_1: \mathbf{x} \in S_{1}+\mathbf{a}\right\}\right|. \nonumber
%\end{equation}
For definition of condition $\star_{2}$ and its importance for correctness of algorithm A1 (defined in Sec.~\ref{subsec:enc3}), readers may refer to \cite{dvir2011}.
}. For each pair $(i,j) \in [N]^2$ with $i < j$, we construct a family of sets $\{A_{i,j}\}_{i<j}$ similar to $\{B_{i,j}\}_{i<j}$ in Sec.~\ref{sec:loc2} such that 
\begin{equation}
\label{eq:punc_ex2}
A_{i,j} = 2\mathbf{a}_i + t((q-1)\mathbf{a}_i+ \mathbf{a}_j),~~3 \leq t \leq 2+L, \nonumber
\end{equation}
where $L$ is again a design parameter. Now, we generate a set $\mathcal{S}_{\mathcal{I}}$ as follows:
\begin{equation}
\label{eq:punc_set3}
\mathcal{S}_{\mathcal{I}} = S_{2} \bigcup \left(\bigcup_{(i,j) \in \mathcal{I}}A_{i,j}\right),
\end{equation}
where $\mathcal{I}$ is as defined in Sec.~\ref{sec:loc2}.

\subsection{Encoding data}
\label{subsec:enc3}
Given a file $\mathcal{F}$ to be encoded, we divide it into $K = {m + 2\choose m}$ symbols $\mathbf{b} = [b_1,\ldots, b_K] \in \mathbb{F}_q^K$. The data vector $\mathbf{b}$ is used to construct a polynomial $f^{\mathbf{b}}(\mathbf{x}) \in \mathbb{F}_q[x_1,\ldots, x_m]$ as follows
\begin{equation}
\label{eq:enc3}
f^{\mathbf{b}}(x_1,\ldots,x_m) = \sum_{i \in \mathcal{M}}b_i\mathbf{x}^{\alpha(i)}, 
\end{equation} 
where $\mathcal{M}$ is the index set for lexicographically arranged irreducible monomials of degree at most $2$ in $\mathbb{F}_q[x_1,\ldots, x_m]$ and $\mathbf{x}^{\alpha(i)} = x_1^{\alpha(i,1)}\ldots x_m^{\alpha(i,m}$ is $i^{th}$ monomial in $\mathcal{M}$, which is uniquely defined by its exponent vector $\alpha(i) = [\alpha(i,1),\ldots, \alpha(i,m)]$. %The exclusion of constant monomial, i.e.,  $\alpha = \mathbf{0}$, is explained in Sec.~\ref{subsec:code_param3}. %Note that we exclude the constant monomial, which corresponds to $\alpha = \mathbf{0}$. This importance of this exclusion becomes clear in the next subsection. 
In order to get codeword $\mathbf{c}_{\mathbf{b}}^{\mathcal{I}}$ corresponding to data vector $\mathbf{b}$, we evaluate the polynomial $f^{\mathbf{b}}(\mathbf{x})$ at all points in $\mathcal{S}_{\mathcal{I}}$. It follows from (\ref{eq:rm_def}) that we have $\mathcal{C}^{\mathcal{I}} = \left(\mathcal{RM}_q(2,m)\right)_{\mathcal{S}_{\mathcal{I}}}$, where $\mathcal{C}^{\mathcal{I}}$ denotes the codebook that we get from aforementioned encoding procedure.
%%Note that for standard RM code $\mathcal{RM}_q(u,m)$, the polynomial is evaluated at all points in $\mathbb{F}_q^m$. Therefore the code proposed in this subsection can be seen as obtained from puncturing a subcode of $\mathcal{RM}_q(u,m)$ on $\mathcal{S}$. Note that we exclude the constant monomial in (\ref{eq:sch1}), this translates to a subcode of $\mathcal{RM}_q(u,m)$. 

In order to show that $\mathcal{C}^{\mathcal{I}}$ is well defined, i.e., its dimension is $K$, we can potentially utilize an approach similar to that used in Sec.~\ref{sec:loc2} and show that a sub-matrix of the generator matrix of $\mathcal{C}^{\mathcal{I}}$ is full rank. However, we follow a different approach in which we show that a polynomial interpolation algorithm recovers the data polynomial $f^{\mathbf{b}}(\mathbf{x})$, therefore the original data vector $\mathbf{b}$, from the evaluations of $f^{\mathbf{b}}(\mathbf{x})$ on $S_2 \subseteq \mathcal{S}_{\mathcal{I}}$. The interpolation algorithm is due to Dvir et al. \cite{dvir2011}, and plays an important role in establishing a lower bound on $d_{\min}(\mathcal{C}^{\mathcal{I}})$ in Sec.~\ref{subsec:code_param3}. We present an outline of the algorithm in context of recovering a polynomial of degree at most $2$. Interested readers may refer to \cite{dvir2011} for complete algorithm and its analysis.

\textbf{Interpolation Algorithm A1 \cite{dvir2011}:} For a polynomial $f(\mathbf{x})$, we define its partial derivate vector
$\Delta_f(\textbf{x}) = \left(\frac{\partial f}{\partial x_1}(\mathbf{x}),\ldots, \frac{\partial f}{\partial x_m}(\mathbf{x})\right)$
and directional derivate in the direction of $\mathbf{a} \in \mathbb{F}_q^m$
\begin{equation}
\partial_{f}(\mathbf{x},\mathbf{a}) = \sum_{i=1}^m a_i\cdot\frac{\partial f}{\partial x_i}(\mathbf{x}) \nonumber
\end{equation}
It follows from Lemma 2.1 in \cite{dvir2011} that for any $\mathbf{a}, \mathbf{b} \in \mathbb{F}_q^m$ and the polynomial of interest $f^{\mathbf{b}}(\cdot)$ of degree at most $2$,
\begin{equation}
\label{eq:lem_dvir}
f^{\mathbf{b}}(\mathbf{x}+\mathbf{a}) - f^{\mathbf{b}}(\mathbf{x} + \mathbf{b}) = \partial_{f^{\mathbf{b}}_2}(\mathbf{x},\mathbf{a}-\mathbf{b}) + E
\end{equation}
where $E$ is a constant and $\partial_{f^{\mathbf{b}}_2}(\mathbf{x})$ is a degree $1$ polynomial which represents directional derivative of $f^{\mathbf{b}}_2(\mathbf{x})$, homogeneous part of $f^{\mathbf{b}}(\mathbf{x})$ with degree $2$. Given evaluations of $f^{\mathbf{b}}(\mathbf{x})$ on $S_2$ the algorithm works as follows:

\noindent\textbf{Step 1:} Define 
\begin{equation}
\label{eq:algo1}
T(i) = S_{1} + \mathbf{a}_{i} = (\mathbf{a}_1 + \mathbf{a}_{i},\ldots, \mathbf{a}_{N} + \mathbf{a}_{i}) \nonumber
\end{equation}
and 
\begin{equation}
\label{eq:algo2}
\mathbf{c}_{\mathbf{b}}^{\mathcal{I}}(i)  = (\mathbf{c}_{\mathbf{b}}^{\mathcal{I}})_{T(i)} = (f^{\mathbf{b}}(\mathbf{a}_1 + \mathbf{a}_{i}),\ldots, f^{\mathbf{b}}(\mathbf{a}_{N} + \mathbf{a}_{i})) \nonumber
\end{equation}
Note that it follows from (\ref{eq:lem_dvir}) that
%\begin{eqnarray}
%\label{eq:algo3}
$\mathbf{c}_{\mathbf{b}}^{\mathcal{I}}(i) - \mathbf{c}_{\mathbf{b}}^{\mathcal{I}}(j) =(f^{\mathbf{b}}(\mathbf{a}_1 + \mathbf{a}_{i}) -f^{\mathbf{b}}(\mathbf{a}_1 + \mathbf{a}_{j}),\ldots, f^{\mathbf{b}}(\mathbf{a}_{N} + \mathbf{a}_{i})-f^{\mathbf{b}}(\mathbf{a}_{N} + \mathbf{a}_{j})) $
%\nonumber \\ &&~~\nonumber
%\end{eqnarray}
represents evaluations of $g_{i,j}(\mathbf{x}) = f^{\mathbf{b}}(\mathbf{x} + \mathbf{a}_i) - f^{\mathbf{b}}(\mathbf{x} + \mathbf{a}_j)$, a polynomial of degree at most $1$, on $S_1$. Thus, using decoding algorithm for $[N,m,2\varepsilon N + 1]_q$ code, we can recover $g_{i,j}(\mathbf{x})$ as proved in part $2$ of Lemma 2.3 in \cite{dvir2011}. Removing the constant term within $g_{i,j}(\mathbf{x})$ results in $\partial_{f^{\mathbf{b}}_2}(\mathbf{x},\mathbf{a}_i-\mathbf{a}_j)$. 

\noindent\textbf{Step 2:} The algorithm takes all ${N \choose 2}$ $\{\partial_{f^{\mathbf{b}}_2}(\mathbf{x},\mathbf{a}_i-\mathbf{a}_j)\}_{i<j}$ and recover $\Delta_{f^{\mathbf{b}}_2}(\mathbf{x})$. As a homogeneous polynomial can be recovered from its partial derivative vector \cite{dvir2011}, we get $f^{\mathbf{b}}_2(\mathbf{x})$ from $\Delta_{f^{\mathbf{b}}_2}(\mathbf{x})$.

\noindent\textbf{Step 3:}  Subtract the contribution of evaluations of $f^{\mathbf{b}}_2(\mathbf{x})$ from codeword, and recover degree $1$ polynomial $\xi^{\mathbf{b}}(\mathbf{x})= f^{\mathbf{b}}(\mathbf{x}) - f^{\mathbf{b}}_{2}(\mathbf{x})$ using the decoding algorithm for $[N, m, 2\varepsilon N + 1]_q$ code that generates $S_1$. Output coefficients of $f^{\mathbf{b}}(\mathbf{x}) = \xi^{\mathbf{b}}(\mathbf{x}) + f^{\mathbf{b}}_{2}(\mathbf{x})$ as the original data vector.
%%%
%%%\remark{The above algorithm is designed to work with even some polynomial evaluations are noisy. Therefore, the above algorithm can recover the data polynomial from its evaluation at a subset of $S_2$, when there is no noise in polynomial evaluation.}

\subsection{Local Node Repair}
In this subsection, we explain that under the assumption similar to (\ref{eq:assum_I}) on $\mathcal{I}$, all symbols in $\mathcal{C}^{\mathcal{I}}$ have locality at most $3$, i.e., each failed storage node can be recovered by contacting $3$ storage nodes. Since $\mathcal{C}^{\mathcal{I}} = (\mathcal{RM}_q(2,m))_{\mathcal{S}_{\mathcal{I}}}$, we have $\left(\mathcal{C}^{\mathcal{I}}\right)^{\perp} = (\mathcal{RM}^{\perp}_q(2,m))|^{\mathcal{S}_{\mathcal{I}}}$. 

Following the reasoning used in Sec.~\ref{subsec:loc2} with assumption that $\mathcal{I}$ adds at least one local parity for each point for each point $\mathbf{p} \in \mathcal{S}_{\mathcal{I}}$, we can find a set of $L+3$ points $\{\mathbf{p}, \mathbf{p}+t_1\mathbf{h},\ldots, \mathbf{p}+t_{L+2}\mathbf{h}\}$ that lie on a line in $\mathbb{F}_q^{q^m}$. We know from Lemma~\ref{lem:line} and the discussion that follows the Lemma that for each set of $4$ points on this line there exist a codeword of weight $4$ in $(\mathcal{C^{\mathcal{I}}})^{\perp} = (\mathcal{RM}^{\perp}_q(2,m))|^{\mathcal{S}_{\mathcal{I}}}$ supported on the $4$ points under consideration. Therefore, for each encoded symbol $f^{\mathbf{b}}(\mathbf{p}), \mathbf{p} \in \mathcal{S}_{\mathcal{I}}$, we can locally recover $f^{\mathbf{b}}(\mathbf{p})$ by contacting a set of $3$ nodes storing symbols associated with $3$ points in the aforementioned set of $L+3$ points on a line defined by the pair $(\mathbf{p},\mathbf{h})$. In fact, $f^{\mathbf{b}}(\mathbf{p})$ can be recovered without knowing the dual codeword of weight $4$ supported on $4$ points (including $\mathbf{p}$) by applying polynomial interpolation based local decoding algorithm described in Sec.~\ref{subsec:loc2}. This establishes that $\mathcal{C}^{\mathcal{I}}$ is $3$-local repairable.

\subsection{Code Parameters}
\label{subsec:code_param3}
In this subsection, we study the rate and minimum distance of $\mathcal{C}^{\mathcal{I}}$. %In what follows, we assume that a generator matrix of an $[N, m, N-m+1]_q$ linear MDS code is used to generate $S_1$. We further assume that $m > 4$, as this along with Lemma 2.4 of \cite{dvir2011} implies condition $\star_{2}$\footnote{In our case, where we deal with polynomial of degree at most $2$, condition $\star_{2}$ implies that $S_1$ satisfies
%\begin{equation}
%\left|\left\{\mathbf{x}\in S_{2}: \mathcal{N}_{2}(\mathbf{x}) > 2\right\} \right| \leq 1,
%\end{equation}
%where $\mathcal{N}_{2}(\mathbf{x})$, for each $\mathbf{x} \in S_2$, is defined as follows
%\begin{equation}
%\mathcal{N}_{2}(\mathbf{x}) = \left|\left\{\mathbf{a} \in S_1: \mathbf{x} \in S_{1}+\mathbf{a}\right\}\right|.
%\end{equation}
%For general definition of condition $\star_{k}$, readers may refer to \cite{dvir2011}.
%}. 
For $m$ large enough and under the assumption that $[N, m, 2\varepsilon N+1]_q$ code satisfies requirements specified in the beginning of Sec.~\ref{sec:loc3}, it follows from Theorem 1.5 in \cite{dvir2011} that algorithm A1 recovers the data polynomial $f^{\mathbf{b}}(\mathbf{x})$ from its evaluations on $S_2$ even when $\frac{\varepsilon^2}{18}|S_2|$ evaluations are incorrect. Therefore we have
%\begin{equation}
%\label{eq:lw_dist3}
%d_{\min}(\mathcal{C}^{\mathcal{I}}) \geq \frac{\varepsilon^2}{9}|S_2| +1 +L = \Theta(m^2),
%\end{equation}
\begin{equation}
\label{eq:lw_dist3}
d_{\min}(\mathcal{C}^{\mathcal{I}}) \geq \frac{\varepsilon^2}{9}|S_2| +1 = \Theta(m^2),
\end{equation}
%The factor of $L$ comes from the fact that construction of $\mathcal{S}_{\mathcal{I}}$ expands $S_2$ using $\{A_{i,j}\}_{i<j}$ (see (\ref{eq:punc_set3})) as this allows algorithm A1 to tolerate $L$ erasures apart from $\frac{\varepsilon^2}{18}|S_2|$ errors.
Similar to Sec.~\ref{subsec:code_param2}, we present rate of $\mathcal{C}^{\mathcal{I}}$ for two choices for $\mathcal{I}$:
\subsubsection{Case 1}
Here we take $\mathcal{I} = \{(i,j) \in [N]^2: i<j\}$. This ensures that each symbol has at least one set of $L+3$ points to allow its local repair. Note that in this case, for some encoded symbols there are multiple line passing through these symbols in $\mathcal{S}_{\mathcal{I}}$. For example, each symbol corresponding to a point in $\{2\mathbf{a_i}\}_{i=1}^{N}$ can be repaired along $N-1$ lines. In this case, we have that $|\mathcal{S}_{\mathcal{I}}| \leq (L+1)N^2$. Therefore,
\begin{equation}
\label{eq:rate_loc3}
\text{rate}(\mathcal{C}^{\mathcal{I}}) \approx \frac{(1-2\varepsilon)^2}{2L} = \Theta \left(\frac{1}{L}\right).
\end{equation}
which can be considered as a good rate for locally repairable codes, when $L$ is small.

\subsubsection{Case 2}
Present definition of $\mathcal{I}$ and $\mathcal{S}_{\mathcal{I}}$ utilize the fact that $\{2\mathbf{a}_i,\mathbf{a}_i+\mathbf{a}_j,2\mathbf{a}_j\}$ lie on a line. We may modify the set $\mathcal{I}$ to be a subset of $[|S_2|]^2$ such that $(i,j) \in \mathcal{I}$ only if $i < j$. The set $\mathcal{S}_{\mathcal{I}}$ also needs to be modified accordingly. Let $\{\mathbf{p}_i\}_{i=1}^{|S_2|}$ be points of $S_2$ in a prespecified order. Take $\mathcal{I} = \{(1,2),\ldots,(i,i+1),\ldots, (|S_2|-1,|S_2|)\}$ (assuming $|S_2|$ is even). Take a family of sets 
\begin{equation}
\label{eq:punc_case}
C_{i,j} = \mathbf{p}_i + t((q-1)\mathbf{p}_i+ \mathbf{p}_j),~~2 \leq t \leq 2+L, \nonumber
\end{equation}
Now, we generate a set $\mathcal{S}_{\mathcal{I}}$ as follows:
\begin{equation}
\label{eq:punc_set33}
\mathcal{S}_{\mathcal{I}} = S_{2} \bigcup \left(\bigcup_{(i,j) \in \mathcal{I}}C_{i,j}\right),
\end{equation}
With these choices of $\mathcal{I}$ and $\mathcal{S}_{\mathcal{I}}$, each node has one set of $L+3$ points to exploit for local repairablity, and it translates into rate of $\mathcal{C}^{\mathcal{I}}$ being greater than $\frac{K}{N^2 + \frac{N^2}{2}(L+1)}$. Note that this rate is at least that in previous case. 

\section*{Acknowledgment}
The authors would like to thank Natalia Silberstein for valuable discussions.

\bibliographystyle{IEEEtran}
\bibliography{locality_final}

\end{document}